\newtheorem{theorem}{Theorem}
\newtheorem{lemma}[theorem]{Lemma}
\author{Matthieu Rosenfeld}
 \newtheorem{Problem}[theorem]{Problem}
\begin{document}
\title{How far away must forced letters be so that squares are still avoidable?}
\maketitle
\begin{abstract}
We describe a new non-constructive technique to show that squares are avoidable by an infinite word even if we force some letters from the alphabet to appear at certain occurrences.
We show that as long as forced positions are at distance at least 19 (resp. 3, resp. 2) from each other then we can avoid squares over 3 letters (resp. 4 letters, resp. 6 or more letters).
We can also deduce exponential lower bounds on the number of solutions.
For our main Theorem to be applicable, we need to check the existence of some languages and we explain how to verify that they exist with a computer.
We hope that this technique could be applied to other avoidability questions where the good approach seems to be non-constructive (e.g., the Thue-list coloring number of the infinite path).
\end{abstract}

\section{Introduction}
A square is a word of the form $uu$ where $u$ is a non empty word.
We say that a word is square-free (or avoids squares) if none of its factors is a square.
For instance, $hotshots$ is a square while $minimize$ is square-free.
In 1906, Thue showed that there are arbitrarily long ternary words avoiding squares \cite{Thue06}.
This result is  often regarded as the starting point of combinatorics on words, and the generalizations of this particular question received a lot of attention.
The authors of \cite{mainquestionpandd} study three such questions asked by Harju \cite{HARJU2018}.
They also introduced a stronger version of the third problem.
\begin{Problem}[{\cite[Problem 4]{mainquestionpandd}}]
 Let $p\ge2$ be an integer and let $v=v_1v_2v_3\ldots$ be any infinite ternary word. Does there exist an infinite ternary square-free word $w=w_1w_2w_3\ldots$ such that
for all $i$, $w_{p\cdot i}=v_i$?
\end{Problem}

They give a partial solution to this question and they show that the answer is yes for any $v$ if $p\ge30$.
In fact, they showed something slightly stronger.
Let $d(\Sigma)$ be the smallest integer such that for all $v\in\Sigma^\omega$ and 
for all sequence of indices $(p_i)_{1\le i}$ such that $\forall i$, $p_{i+1}-p_i\ge d(\Sigma)$, there is an infinite square-free
word $u\in\Sigma^\omega$ such that $v_i= u_{p_i}$.
They showed that $6\le d(\{0,1,2\})\le30$. 
Moreover, the fact that squares are avoidable over 3 letters can be used to show that $d(\{0,1,2,3\})\le 7$.
We show that $7\le d(\{0,1,2\})\le19$, $d(\{0,1,2,3\})=3$, $d(\{0,1,2,...,k\})=2$ for $k\ge6$.

The main theorem of this paper gives sufficient conditions for the existence of square-free languages that fulfill some constraints.
Kolpakov showed that there are more than $1.30125^n$ square-free words of length $n$ over a ternary alphabet using a new non-constructive technique \cite{Kolpakov2007}. 
One of the ideas behind Kolpakov's result is roughly to approximate (using a computer) the language of square-free words by 
the language of words avoiding squares of period less than $l$ for large $l$, 
and to show that we do not lose too many words if we remove the larger squares from this language.
We use a similar idea in this paper.
We also use ideas from the power series method (see for instance \cite{BELL20071295,doublepat}) even if we do not explicitly manipulate any power series.
It seems to be a good approach to show that the Thue-list number of paths is $3$ 
(see \cite{CZERWINSKI2007453,doi:10.1002/rsa.20411} for definitions and conjectures on this topic) or to tackle other problems that might require a non-constructive approach.

This paper is organized as follows. 
We start by fixing some notations in Section \ref{secdef}.
In Section \ref{secidea}, we give a weaker version of Theorem \ref{mainTh} to present the ideas of the theorem without some of the technicalities.
Then in Section \ref{secentr}, we give the proof of Theorem \ref{mainTh}, our main theorem. 
In Section \ref{secfindingL}, we explain how to verify with a computer the existence of some languages that are required to apply Theorem \ref{mainTh}.
Finally, in Section \ref{secconc}, we use Theorem \ref{mainTh} to bound the values of $d$ for different alphabet sizes.

\section{Definitions and notations}\label{secdef}
We denote the set of non-negative integer (resp. positive integers) by $\mathbb{N}_0$ (resp. $\mathbb{N}_{>0}$).
For any word $w\in\Sigma^*$, we denote the $i$th letter of $w$ by $w_i$ and the length of $w$ by $|w|$.
Then for any $w\in\Sigma^*$, $w=w_1w_2\ldots w_{|w|}$.
For any set of non-empty words $W$, we let $W^*$ (resp. $W^\omega$) be the set of words obtained by catenation of
finitely  many (resp. infinitely many) elements of $W$.
A \emph{language} over an alphabet is a set of finite words over this alphabet. 
We use the convention that $\prod_{x\in\emptyset}x = 1$ and $\max_{x\in\emptyset}x = 0$ 
(we could use $-\infty$ for the second one, but it is slightly less convenient for the implementation).

A \emph{partial word over $\Sigma$} is a (possibly infinite) word over the alphabet $\Sigma\cup\{\diamond\}$. 
For any partial word $\mu\in (\Sigma\cup\{\diamond\})^*\cup (\Sigma\cup\{\diamond\})^\omega$ and word $v\in\Sigma^*\cup \Sigma^\omega$,
we say that \emph{$v$ is compatible with $\mu$} if $|v|\le|\mu|$ and $\mu_i\not=\diamond \implies \mu_i=v_i$ for all $i$ such that $v_i$ and $\mu_i$ are defined.
We denote by $S(\mu)$ the set of square-free words that are compatible with the partial word $\mu$.

\section{Idea of Theorem \ref{mainTh}}\label{secidea}
The main Theorem of this paper is Theorem \ref{mainTh}. 
The main idea of this theorem is that if a language avoids short squares and is large enough then it contains square-free words of any length.
The statement and proof of this theorem are rather difficult to follow so we give in this section a version of the Theorem for the case where the set $W$ is a singleton $\{w\}$.
We hope that this helps to convey the ideas of the proof of Theorem \ref{mainTh}.
This is in fact really similar to the ideas of \cite{doublepat}, but instead of building the word letter by letter,
we construct it factor by factor. 
For that we fix one size of a factor and look at the number of words whose length corresponds to multiples of this size.
\begin{theorem}\label{easy}
Let $\Sigma$ be an alphabet, $w\in(\Sigma\cup\{\diamond\})^*$ be a finite partial word and $p\ge2|w|$ such that $|w|$ divides $p$.
Suppose that there are $C\in\mathbb{N}_{>0}$ and $L$ a language such that:
\begin{enumerate}[(I)]
 \item $\varepsilon\in L$.
 \item For all $u\in L$, $u$ avoids squares of period less than $p$.
 \item For any $u\in L$ there are at least $C$ different words $v\in\Sigma^{|w|}$ compatible with $w$ such that $uv\in L$.
 \item \label{condeq}There exists $x\in]0,1[$ such that:
\begin{equation*}
C\left(1-\frac{x^{\frac{p}{|w|}-1}|w|^2 }{1-x}\right) \ge x^{-1}
\end{equation*}
\end{enumerate}
 Then $S(w^\omega)$ is infinite.
\end{theorem}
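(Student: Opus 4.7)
The plan is to prove by induction on $n$ that $|B_n| \ge x^{-n}$, where $T := |w|$, $B_n$ denotes the set of square-free words of length $nT$ lying in $L$ and compatible with $w^n$, and $x\in(0,1)$ is the value supplied by (IV). Since $x<1$, this gives $|B_n|\to\infty$, and as every element of $B_n$ is a (finite) square-free word compatible with $w^\omega$ and elements of different lengths are distinct, $S(w^\omega)$ is infinite.

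The base case $n=0$ is provided by (I). For the inductive step, fix $u\in B_n$ and let $E(u)=\{v\in\Sigma^T : v \text{ is compatible with } w \text{ and } uv\in L\}$, so $|E(u)|\ge C$ by (III). If $uv\notin B_{n+1}$, then $uv$ contains a square $ss$; by (II) its period $q$ satisfies $q\ge p$, and since $u$ itself is square-free and $q\ge p\ge 2T$, the first half of $ss$ lies entirely inside $u$ while the second half ends at position $nT+i$ for some $i\in\{1,\dots,T\}$. The square equation forces $v_k = u_{nT+k-q}$ for $k=1,\dots,i$, so the prefix $v_1\cdots v_i$ of $v$ is uniquely determined by $u$ together with $q$.

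The technical heart is to bound the total number of such bad pairs $(u,v)$ summed over $u\in B_n$ by $C\cdot\frac{x^{p/T-1}T^2}{1-x}\cdot|B_n|$. I would parameterize each bad pair with its identified square by $(i,q)$, write $q=kT+r$ with $0\le r<T$, and construct an injection sending the bad data to a shorter object — essentially the word $\tilde u\tilde v$ obtained by excising one copy of $s$ from $uv$, tagged with $(i,r)$ — whose count at each level $k$ is controlled by $|B_{n-k}|$ (or a close variant, using that the square equation makes the letters of $u$ on the far side of $s$ redundant). The factor $T^2$ collects the $T$ choices of $i$ and of $r$, and the geometric series $\sum_k x^k = x^{p/T-1}/(1-x)$ accumulates contributions across $k$, the rate $x$ arising from the ratio between consecutive $B$-counts afforded by the inductive lower bound $|B_{n-k}|\ge x^{-(n-k)}$. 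Combining this bound with (IV) gives
\[|B_{n+1}|\;\ge\;C|B_n|\;-\;C\cdot\tfrac{x^{p/T-1}T^2}{1-x}\cdot|B_n|\;\ge\;x^{-1}|B_n|\;\ge\;x^{-(n+1)},\]
closing the induction.

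The main obstacle is making the injection precise and extracting the correct geometric rate. The shortened word $\tilde u\tilde v$ has length $(n+1)T-q$, which is not a multiple of $T$ and need not itself lie in $L$, since no factor- or prefix-closedness of $L$ is assumed; the parameterization $q=kT+r$ is designed to absorb this mismatch into the $T^2$ combinatorial factor, but relating the encoding count back to $|B_{n-k}|$ using only a \emph{lower} bound on $|B_{n-k}|$ is delicate and will likely require either passing to a prefix-closed sub-language of $L$ (which can be done without loss, since (I)--(III) are inherited) or a different averaging that avoids upper-bounding $|B_{n-k}|$ altogether. Once this counting step is established, the rest is the algebraic manipulation displayed above.
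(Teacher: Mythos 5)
Your overall strategy coincides with the paper's: induct on the number of blocks, establish $|B_{n+1}| \ge C|B_n| - |F|$ where $F$ collects the extensions spoiled by a square, bound $|F|$ by a geometric series over how far back the square reaches, and convert counts at earlier levels to counts at level $n$ via the inductive ratio $x$ (using the lower-bound recurrence as an upper bound $|B_{n-k}|\le x^k|B_n|$, which you correctly gesture at). The closing algebra is identical. The gap is at exactly the point you flag as ``the technical heart,'' and it is not merely a matter of making your excision injection precise: to upper-bound the number of bad words $uv$, after the square has determined $uv$ up to position $nT+i$ you must bound the number of completions of the remaining $T-i$ letters of $v$. Condition (III) gives only a \emph{lower} bound of $C$ on the number of extensions of a word of $L$, so with $L$ as given this completion count is unbounded and the factor $C$ in your target bound $C\cdot\frac{x^{p/T-1}T^2}{1-x}\cdot|B_n|$ cannot be extracted.

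The paper resolves this before any counting by pruning $L$ to a sub-language $L(\mu)$ in which every word of length divisible by $|w|$ has \emph{exactly} $C$ admissible extensions (possible by (I)--(III), built greedily block by block). This pruning does double duty: it caps the completion count at $C$, and it makes $L(\mu)$ a tree at block granularity, so the prefix of a bad word at the block containing the square's midpoint is itself one of the $s_j$ square-free words counted at that level --- which is what yields the honest upper bound $|F_j|\le s_j\, C\, T^2$. Your proposed fix (passing to a prefix-closed sub-language, claiming (I)--(III) are inherited) supplies neither property: prefix-closedness does not cap branching, and (III) is not inherited by an arbitrary prefix-closed sub-language. Once the exactly-$C$ pruning is in place, your parameterization by $(i,q)$ with $q=kT+r$ is the paper's parameterization of the square by the positions of its endpoint and midpoint within their blocks, and no excision injection is needed: one counts directly (a square-free prefix up to the midpoint's block, $T^2$ choices of positions, the square then determines everything up to position $nT$, and at most $C$ completions), so the mismatch you worry about with lengths that are not multiples of $T$ never arises.
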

\begin{proof}
Let $\mu=w^\omega$.
Let $L(\mu)$ be a set of words from $L$ that are compatible with $\mu$ such that,
for any $u\in L(\mu)$ of length divisible by $|w|$, there are exactly $C$ different words $v\in\Sigma^*$ compatible with $w$ with $uv\in L(\mu)$. 
Conditions (I),(II) and (III) imply that such a set can be obtained by removing words from $L$.
For all non-negative $i$, let $s_{i}= \left|S(\mu)\cap L(\mu)\cap \{u\in\Sigma^*: |u|=i|w|\}\right|$
be the number of square-free words of $L(\mu)$ of length $i|w|$.

We will show by induction on $i$ that for all positive $i$, $s_{i+1}\ge x^{-1}s_{i}$.
Let $n$ be a  positive integer such that:
\begin{equation}
 \forall 0\le i<n,\, s_{i+1}\ge x^{-1}s_{i}\tag{IH1}\label{IH11}
\end{equation}
 
By definition of $L(\mu)$, for any word $w$ of $S(\mu)\cap L(\mu)$ there are exactly $C$ different factors $v$ of length $|w|$ such that $wv$ is in $L(\mu)$.
Let $F$ be the set of words in $L(\mu)\setminus S(\mu)$ of length $(n+1)|w|$ whose prefix of length $n|w|$ is in $S(\mu)\cap L(\mu)$. 
Then by definition: 
\begin{equation}\label{eqrecs1}
s_{n+1}\ge Cs_{n}-|F|.
\end{equation}

In order to bound $|F|$, let us introduce for all $i<n+1$, $F_i=\{uvvy\in F: |w|(i-1)<|uv|\le i|w|, |y|<|w|\}$.
That is, $F_i$ is the set of words of $F$ that contain a square whose midpoint (the middle of the square) is located between the positions $(i-1)|w|$ and $i|w|$ in the word.
Clearly $|F|\le \sum_{i=1}^{n} |F_i|$, so our next task is to compute bounds on $|F_i|$ for all $i$.
\begin{lemma}\label{sizefi1}
We have the following inequalities:
\begin{itemize}
 \item for all $i> n+1-\frac{p}{|w|}$, $|F_i|=0$,
 \item for all $i \le n+1-\frac{p}{|w|}$, $|F_{i}|\le s_{i}C|w|^2$.
\end{itemize}
\end{lemma}
\begin{proof}
If $i> n+1-\frac{p}{|w|}$, then $(i-1)|w|+p\ge (n+1)|w|$.
Since, $L$ does not contain squares of period less than $p$, $F_i=\emptyset$.

 Now, let $i \le n+1-\frac{p}{|w|}$.
 For any $i$ and $z\in S(\mu)\cap L(\mu)\cap \{u\in\Sigma^*: |u|=i|w|\}$ let $F_i(z)$ be the set of words of $F_i$ that admit $z$ as a prefix.
 
 By definition of $F_i$, any word  $F_i(z)$ contains a square whose second half starts in position $a+1$ and ends in position $b$ where $(i-1)|w|< a\le i|w|$ and $ n|w|< b\le (n+1)|w|$.
 Given $z$, $a$ and $b$, we know the first half of the square and thus the word is known at least up to position $n|w|$.
 By definition of $L(\mu)$ there are at most $C$ possible values for the remaining $|w|$ letters.
 By summing over all the values of $a$ and $b$ one gets: $|F_{i}(z)|\le C|w|^2$.
 By summing over all the values of $z$, we finally get $|F_{i}|\le s_{i}C|w|^2$.
\end{proof}
Now, by \eqref{IH11}, for all $i$, $|F_i|\le C|w|^2x^{n-i}s_n$ and thus:
\begin{align*}
|F|\le& \sum_{i=1}^{n} |F_i|\le \sum_{i=1}^{n+1-\frac{p}{|w|}} C|w|^2x^{n-i}s_n\le C|w|^2s_n \sum_{i=\frac{p}{|w|}-1}^{\infty} x^{i}\\
|F|\le& C|w|^2s_n \frac{x^{\frac{p}{|w|}-1}}{1-x}
\end{align*}

We can use this bound in inequality (\ref{eqrecs1}) and we get:
\begin{align*}
s_{n+1}&\ge Cs_{n}-C|w|^2s_n \frac{x^{\frac{p}{|w|}-1}}{1-x}\\
 s_{n+1}&\ge  s_nC\left(1-\frac{x^{\frac{p}{|w|}-1}|w|^2 }{1-x}\right)\\
 s_{n+1}&\ge  x^{-1}s_n\text{\ \ \ (By Theorem hypothesis \ref{condeq})}
\end{align*}

This concludes the proof that for all positive $i$, $s_{i+1}\ge x^{-1}s_{i}$. 
Since $s_0=1$, we deduce that $s_i$ is unbounded and thus $S(w^\omega)$ is infinite.
\end{proof}

\section{The main theorem}\label{secentr}
This section is devoted to the proof of the main Theorem.
As already mentioned the ideas of the proof are the same as for the proof of Theorem \ref{easy}.
However, this is more technical because $W$ is not a singleton anymore.
Moreover, we want the equivalent of condition \eqref{condeq} to be as general as possible and for that,
we need to bound the size of $|F|$ as tightly as possible. 
Thus the equivalent of Lemma \ref{sizefi1} (Lemma \ref{sizefi2}) is much more technical and we delay its proof to a later subsection.
\begin{theorem}\label{mainTh}
Let $\Sigma$ be an alphabet, $W\subseteq(\Sigma\cup\{\diamond\})^*$ be a finite set of finite partial words,
$p\ge2\max\{|w|: w\in W\}$ be an integer.
Suppose that there is a language $L$ and a function $f:\mathbb{N}_{>0}\rightarrow \mathbb{N}_{>0}$ such that:
\begin{enumerate}[(I)]
\item $\varepsilon\in L$.
\item For any $u\in L$ and $w\in W$ there are at least $f(|w|)$ different words $v\in\Sigma^{|w|}$ compatible with $w$ and such that $uv\in L$.
\item For all $u\in L$, $u$ avoids squares of period less than $p$.
\item \label{condeq2} For all $u,v\in W$ and integer $1\le i\le|v|$, let
$$\alpha(|u|,|v|)=\sum_{m=1}^{|u|} \sum_{j=0}^{\left\lfloor\frac{|v|-1}{m}\right\rfloor}\min\left\{f(|v|),(|\Sigma|-1)^{|v|-1-jm}\right\}$$
and $$\alpha'(i,|v|) =\sum_{m=0}^{i-1} \min\left\{f(|v|),(|\Sigma|-1)^{m}\right\}.$$
There exist $x_1,x_2,\ldots, x_{\max\{|w|: w\in W\}}\in]0,1[$ and $\beta:\{0,\ldots,p\}\rightarrow [0,1]$ solution of the following system:
\begin{equation*}
 \left\{
 \begin{array}{l}
  \forall w\in W,\\ f(|w|) -
 \max\limits_{\substack{u,v\in W \\1\le r\le |v|}}\left\{\beta\left(r+p-|w|-|v|\right)\left(\alpha'(r,|w|)+\frac{x_{|v|}\alpha(|u|,|w|)}{1-x_{|u|}}\right)\right\} \ge x_{|w|}^{-1}\\
  \forall j\le p, \beta(j)=\max\left\{\prod\limits_{i\in\{|u|:u\in W\}} x_i^{n_i} \Bigg|
 \begin{array}{c}
 \forall i\in\{|u|:u\in W\} , n_i\in \mathbb{N}_0,\\
 \text{and}\\
 \sum\limits_{i\in\{|u|:u\in W\}} i \cdot n_i= j
 \end{array}\right\}\\
\end{array}
\right.
\end{equation*}
\end{enumerate}
Then for any infinite partial word $\mu\in W^\omega$, $S(\mu)$ is infinite.
\end{theorem}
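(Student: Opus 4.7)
My plan is to mirror Theorem \ref{easy}, with added bookkeeping to accommodate blocks of varying lengths. Fix an infinite partial word $\mu = w^{(1)} w^{(2)} w^{(3)} \cdots \in W^\omega$ and set $P_n = \sum_{j=1}^{n} |w^{(j)}|$. I would first construct $L(\mu) \subseteq L$ by trimming, so that every $u \in L(\mu)$ of length $P_n$ has \emph{exactly} $f(|w^{(n+1)}|)$ extensions of length $|w^{(n+1)}|$ in $L(\mu)$ compatible with $w^{(n+1)}$; conditions (I)--(III) make this possible. Setting $s_n = |S(\mu) \cap L(\mu) \cap \Sigma^{P_n}|$, everything reduces to the inductive inequality $s_{n+1} \ge x_{|w^{(n+1)}|}^{-1} s_n$, since iterating yields exponential growth and hence $S(\mu)$ infinite.

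For the inductive step, let $F$ be the length-$P_{n+1}$ words in $L(\mu) \setminus S(\mu)$ whose length-$P_n$ prefix lies in $S(\mu) \cap L(\mu)$, so that $s_{n+1} \ge f(|w^{(n+1)}|) s_n - |F|$. I would decompose $F = \bigcup_{i=1}^{n+1} F_i$ according to which block $i$ contains the midpoint of the offending square, and bound each $|F_i|$ by the analogue of Lemma \ref{sizefi1}. The inductive hypothesis, combined with the definition of $\beta$ as the maximum of $\prod x_i^{n_i}$ over compositions of its argument into block lengths from $W$, yields $s_i \le \beta(P_n - P_i) s_n$, which lets me pull every occurrence of $s_i$ back to $s_n$.

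The main obstacle is establishing the bound on $|F_i|$ sharply enough to match condition \ref{condeq2}. After fixing a square-free prefix of length $P_i$, the midpoint position $r$ inside block $i$ (so that $w^{(i)}$ plays the role of $v$ in \ref{condeq2}), and the square period $m \ge p$, the first half of the offending square ends either strictly inside some earlier block of length $|u|$ --- the $\alpha$ regime, where $m \le |u|$ and, for each shift index $j$, the periodic structure pins the final block down to $|w^{(n+1)}| - 1 - jm$ free positions, giving at most $\min\{f(|w^{(n+1)}|),(|\Sigma|-1)^{|w^{(n+1)}|-1-jm}\}$ completions --- or exactly at the boundary just before block $i$, the $\alpha'$ regime, where summing $\min\{f(|w^{(n+1)}|),(|\Sigma|-1)^m\}$ over $m = 0, \ldots, r-1$ bounds the number of completions. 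Summing the $\alpha$ contributions over all candidate earlier blocks collapses the induced geometric series in $x_{|u|}$ into $x_{|v|}\alpha(|u|,|w^{(n+1)}|)/(1 - x_{|u|})$, while the factor $\beta(r + p - |w^{(n+1)}| - |v|)$ absorbs the minimum gap that the constraint $m \ge p$ forces across intermediate blocks. Condition \ref{condeq2} is calibrated precisely so that $f(|w^{(n+1)}|) - |F|/s_n \ge x_{|w^{(n+1)}|}^{-1}$, closing the induction.
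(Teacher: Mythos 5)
Your overall architecture coincides with the paper's: the trimmed language $L(\mu)$ with exactly $f(|w^{(n+1)}|)$ extensions, the counts $s_n$, the induction $s_{n+1}\ge x_{|w^{(n+1)}|}^{-1}s_n$, the decomposition of $F$ by the block containing the square's midpoint, the geometric-series collapse, and the use of $\beta$ to transport $s_i$ back to $s_n$ are precisely the steps of the paper (Lemmas \ref{sizefi2} and \ref{sizeofF}).

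However, the justification you offer for the per-block bounds --- the analogue of Lemma \ref{sizefi2}, which is the technical heart of the proof --- does not work as stated. You describe the $\alpha$ regime as the one ``where $m\le|u|$'' with $m$ ``the square period''; but every square occurring in $F$ has period at least $p\ge 2\max_{w\in W}|w|\ge 2|u|$, so under your reading that regime is empty. In the paper's $\alpha(|u|,|v|)$ the outer index $m$ is not the period: it is (essentially) the distance from the square's midpoint to the end of the block $\mu_i=u$ containing it, and the inner index $j$ enumerates the \emph{distinct} squares sharing that midpoint. The essential point, absent from your sketch, is that if two squares share a midpoint and both have their first halves inside the square-free prefix $z$, then square-freeness of $z$ forces their right endpoints to be separated by more than $|z|-a-b+1$ (inequality \eqref{squaresamemiddle}); this separation is what bounds $j$ by $\lfloor(|v|-1)/m\rfloor$ and makes the $j$-th square leave only $|v|-1-jm$ undetermined letters in the final block. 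Without it there is no bound at all on the number of elements of $F_i$ sharing a given prefix $z$, since a priori many squares could end at adjacent positions and each would contribute up to $f(|v|)$ completions. Likewise, the $\alpha'$ term is not about the first half ending ``at the boundary just before block $i$'': it is the bound for the single critical block $\mu_d$ whose midpoints lie at distance just above $p$ from the end of the word, where the same separation inequality shows that each midpoint supports at most \emph{one} square, and $r=|\mu_d\cdots\mu_{n+1}|-p$ counts the admissible midpoints. These two counting arguments must be supplied to close the proof.
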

\begin{proof}
Let $\mu\in W^\omega$ and $(\mu_i)_{i\in\mathbb{N}_{>0}}\in W^{\mathbb{N}_{>0}}$ be a sequence of elements of $W$ such that $\mu=\mu_1\mu_2\mu_3\ldots$.
For any integer $i$, let $l(i)= |\mu_1\ldots \mu_i|$.
Let $L(\mu)$ be a set of words from $L$ that are compatible with $\mu$ such that,
for any $u\in L(\mu)$ of length $|\mu_1\ldots \mu_j|$, there are exactly $f(|\mu_{j+1}|)$ different words $v\in\Sigma^*$ compatible with $\mu_{j+1}$ with
$uv\in L(\mu)$. That is, we remove words from $L$ in order to replace the ``at least $f(|\mu_{j+1}|)$'' by ``exactly $f(|\mu_{j+1}|)$''.
For all non-negative $i$, let $s_{i}= \left|S(\mu)\cap L(\mu)\cap \{v\in\Sigma^*: |v|=i\}\right|$
be the number of square-free words of $L(\mu)$ of length $i$.

We will show by induction on $i$ that for all positive $i$, $s_{l(i+1)}\ge x_{|\mu_{i+1}|}^{-1}s_{l(i)}$.

Let $n$ be a  positive integer such that:
\begin{equation}
 \forall 0\le i<n,\, s_{l(i+1)}\ge x_{|\mu_{i+1}|}^{-1}s_{l(i)}\tag{IH1}\label{IH1}
\end{equation}
 
By definition of $L(\mu)$, for any word $w$ of $S(\mu)\cap L(\mu)$ of length $l(n)$ there are exactly $f(|\mu_{n+1}|)$ different factors $v$ of length $|\mu_{n+1}|$ such that $wv$ is in $L(\mu)$.
Let $F$ be the set of words in $L(\mu)\setminus S(\mu)$ of length $l(n+1)$ whose prefix of length $l(n)$ is in $S(\mu)\cap L(\mu)$. 
Then by definition: 
\begin{equation}\label{eqrecs}
s_{l(n+1)}\ge f(|\mu_{n+1}|)s_{l(n)}-|F|.
\end{equation}

In order to bound $|F|$, let us introduce for all $i<n+1$, $F_i=\{uvvw\in F: l(i-1)<|uv|\le l(i), |w|<|\mu_{n+1}|\}$.
That is, $F_i$ is the set of words of $F$ that contain a square whose midpoint (the middle of the square) is located between the positions $|\mu_1\ldots \mu_i|$ and $|\mu_1\ldots \mu_{i-1}|$ in the word.
Clearly $|F|\le \sum_{i=1}^{n} |F_i|$, so our next task is to compute the values of $|F_i|$ for all i.
Let $d$ be the smallest integer such that $|\mu_{d+1}\ldots \mu_{n+1}|\le p$ and let $r=|\mu_{d}\mu_{d+1}\ldots \mu_{n+1}|-p$.
Remark that $r>0$.
\begin{restatable}{lemma}{sizefii}
\label{sizefi2}
We have the following inequalities:
\begin{itemize}
 \item for all $i> d$, $|F_i|=0$,
 \item $|F_{d}|\le s_{l(d)}\alpha'(r,|\mu_{n+1}|)$,
 \item for all $i \le d$, $|F_{i}|\le s_{l(i)}\alpha(|\mu_{i}|,|\mu_{n+1}|)$.
\end{itemize}
\end{restatable}
The proof of this Lemma is not really informative and is mostly a rather technical counting argument, so we moved it to Section \ref{secprooflemma}.

We can use the bounds on the sizes of the $F_i$s to bound $|F|$:
\begin{lemma}\label{sizeofF}
We have $$|F|\le s_{l(d)}\max_{u\in W}\left\{\alpha'(r,|\mu_{n+1}|)+\frac{x_{|\mu_d|}\alpha(|u|,|\mu_{n+1}|)}{1-x_{|u|}}\right\}.$$
\end{lemma}
\begin{proof}
First, let us show by induction on $i$ that for all $0\le i<d$: 
\begin{equation*}\sum_{ j=0}^{i} s_{l(j)}\alpha(|\mu_j|,|\mu_{n+1}|)\le s_{l(i)}\max_{u\in W}\left\{\frac{\alpha(|u|,|\mu_{n+1}|)}{1-x_{|u|}}\right\}.\tag{IH2}\label{IH2}
\end{equation*}

Let us first show that this is true with $i=0$, using the fact that $\mu_0\in]0,1[$.
$$s_{l(0)}\alpha(|\mu_0|,|\mu_{n+1}|)\le
s_{l(0)}\frac{\alpha(|\mu_0|,|\mu_{n+1}|)}{1-x_{|\mu_0|}}\le
s_{l(0)}\max_{u\in W}\left\{\frac{\alpha(|u|,|\mu_{n+1}|)}{1-x_{|u|}}\right\}.
$$

Now, let $i+1$ be an integer such that \eqref{IH2} is true for $i$.
\begin{align*}
 &\sum_{ j=0}^{i+1} s_{l(j)}\alpha(|\mu_j|,|\mu_{n+1}|)\le s_{l(i+1)}\alpha(|\mu_{i+1}|,|\mu_{n+1}|)+s_{l(i)}\max_{u\in W}\left\{\frac{\alpha(|u|,|\mu_{n+1}|)}{1-x_{|u|}}\right\}\\
 &\le s_{l(i+1)}\left(\alpha(|\mu_{i+1}|,|\mu_{n+1}|)+x_{|\mu_{i+1}|}\max_{u\in W}\left\{\frac{\alpha(|u|,|\mu_{n+1}|)}{1-x_{|u|}}\right\}\right) \text{ (By \eqref{IH1})}\\
 &\le s_{l(i+1)}\max_{u\in W}\left\{\frac{\alpha(|u|,|\mu_{n+1}|)}{1-x_{|u|}}\right\}
\end{align*}
Thus equation \eqref{IH2} is true for all $i\le d$ and in particular for $i=d-1$ and we get:
\begin{align*}
 |F|&\le|F_d|+ s_{l(d-1)}\max_{u\in W}\left\{\frac{\alpha(|u|,|\mu_{n+1}|)}{1-x_{|u|}}\right\}\\
 |F|&\le s_{l(d)}\alpha'(r,|\mu_{n+1}|)+ s_{l(d)}x_{|\mu_{d}|} \max_{u\in W}\left\{\frac{\alpha(|u|,|\mu_{n+1}|)}{1-x_{|u|}}\right\}\\
 |F|&\le s_{l(d)}\max_{u\in W}\left\{\alpha'(r,|\mu_{n+1}|)+\frac{x_{|\mu_{d}|}\alpha(|u|,|\mu_{n+1}|)}{1-x_{|u|}}\right\}
\end{align*}
This concludes the proof of this Lemma.
\end{proof}

By induction hypothesis \eqref{IH1} $s_{l(d)}\le s_{l(n)}\prod_{i=d+1}^{n}x_{|\mu_{j}|}$.
Let us bound the product on the right hand side:
\begin{align*}
 \prod_{i=d+1}^{n}x_{|\mu_{j}|}&\le\max\left\{\prod\limits_{i\in\{|u|:u\in W\}} x_i^{n_i} \Bigg|
 \begin{array}{c}
 \forall i\in\{|u|:u\in W\} , n_i\in \mathbb{N}_{0}\\
 \text{and}\\
 \sum\limits_{i\in\{|u|:u\in W\}} i \cdot n_i= l(n+1)-l(d)-\mu_{n+1}
 \end{array}\right\}\\
  \prod_{i=d+1}^{n}x_{|\mu_{j}|}&\le\beta\left(l(n+1)-l(d)-\mu_{n+1}\right)\\
  \prod_{i=d+1}^{n}x_{|\mu_{j}|}&\le\beta\left(r+p-\mu_{n+1}-\mu_{d}\right)
\end{align*}
Now, using this equation  with Lemma \ref{sizeofF} gives 
$$|F|\le s_{l(n)}\beta\left(r+p-\mu_{n+1}-\mu_{d}\right)\max_{u\in W}\left\{\alpha'(r,|\mu_{n+1}|)+\frac{x_{|\mu_{d}|}\alpha(|u|,|\mu_{n+1}|)}{1-x_{|u|}}\right\}$$
Now recall that $r=|\mu_{d}\mu_{d+1}\ldots \mu_{n+1}|-p$ and thus by definition of $d$, $1\le r\le\mu_d$.
We deduce:
$$|F|\le s_{l(n)}\max\limits_{\substack{u,v\in W \\1\le r\le |v|}}\left\{\beta\left(r+p-\mu_{n+1}-|v|\right)\left(\alpha'(r,|\mu_{n+1}|)+\frac{x_{|v|}\alpha(|u|,|\mu_{n+1}|)}{1-x_{|u|}}\right)\right\}$$

We can finally replace $|F|$ by this bound in inequality (\ref{eqrecs}) and we get:
\begin{align*}
 &s_{l(n+1)}\ge  s_{l(n)}\Bigg(f(|\mu_{n+1}|) -\\
 &\max\limits_{\substack{u,v\in W \\r\in\{1,\ldots, |v|\}}}\left\{\beta\left(r+p-\mu_{n+1}-|v|\right)\left(\alpha'(r,|\mu_{n+1}|)+\frac{x_{|v|}\alpha(|u|,|\mu_{n+1}|)}{1-x_{|u|}}\right)\right\}\Bigg)\\
 &s_{l(n+1)}\ge  s_{l(n)}x_{|\mu_{n+1}|}^{-1} \text{\ \ \ (By Theorem hypothesis \eqref{condeq2})}
\end{align*}
Moreover $s_0=1$ and thus for all $i$, $s_{|\mu_1\ldots\mu_i|}\ge\prod_{j=1}^i x_{|\mu_j|}^{-1}$.
For all $j$, $x_{|\mu_j|}^{-1}>1$, so we conclude that  $S(\mu)$ is infinite.
\end{proof}
Remark that Theorem \ref{mainTh} is far from sharp. 
One could improve the bounds given by Lemma \ref{sizefi2}.
This could be done by lowering $\alpha$ and $\alpha'$ or by introducing a third coefficient $\alpha''$ for the second non-empty $F_i$.
However, we were not able to obtain significant improvement that were worth the additional technicalities.

In Section \ref{secfindingL} we explain how to verify with a computer that there exists a language $L$ that satistfies conditions (I),(II) and (III).
We also need a way to verify condition (IV). 
In order to compute $\beta$, we can use that $\beta(0)=1$ and, for all $j\in \left\{1,\ldots, p\right\}$, $\beta(j)=\max\left\{x_{|u|}\beta(j-|u|): u\in W, |u|\le j\right\}$.
Thus given the values of the $x_i$ one can compute $\beta$ using a dynamic algorithm and all the rest is straight forward to compute. 
Thus it is easy to verify with a computer whether or not a given set of values of $x_i$ is a solution.
We provide a C++ program that takes as input $|\Sigma|$, $k$, $p$, $f$ and $x_1,\ldots, x_k$ and
verifies whether this is a solution of the equations of condition (IV).

\subsection{Proof of Lemma \ref{sizefi2}}\label{secprooflemma}
This subsection is dedicated to the proof of Lemma \ref{sizefi2}.
Remark that the statement and proof are not self-contained since some of the notations are defined in the proof of Theorem \ref{mainTh}.

\sizefii*

\begin{proof}
 If $i> d$ then by definition $|\mu_{i}\ldots \mu_{n+1}|\le p$. 
 Moreover, $L$ does not contain squares of period less than $p$ and thus $F_i=\emptyset$.

 Now, let $i\le d$.
 By definition, any word from $F_i$ can be written $uvvy$ with $l(i-1)<|uv|\le l(i), |y|<|\mu_{n+1}|$. 
 For any $i$ and $z\in S(\mu)\cap L(\mu)\cap \{u\in\Sigma^*: |u|=l(i)\}$, let $F_i(z)$ be the set of  words of $F_i$ that admit $z$ as a prefix.
 Clearly $F_i=\sum_{z\in S(\mu)\cap L(\mu)\cap \{u\in\Sigma^*: |u|=l(i)\}}F_i(z)$.
 
 Let $a,b$ (resp. $a',b'$) be integers such that there is an element $z'\in F_i(z)$ that contains a square starting at $a$ (resp. $a'$)
 and of period $b$ (resp. $b'$) with $l(i-1)+1<a'+b'=a+b\le |z|+1$ and $a>a'$.
 Because of the square in $z'$, we know that for all $0\le j\le |z|-a-b$, $z_{a+j}=z_{a+b+j}=z_{a'+b'+j}=z_{a'+j}$.
 If $a\le a'+|z|-a-b+1$ then $z$ contains a square which is not possible.
 Hence 
 \begin{align}
a&> a'+|z|-a-b+1\nonumber\\
a+b+b'&> a'+b+b'+|z|-a-b+1\nonumber\\
a'+2b'&> a+2b+|z|-a-b+1\ \text{ since }a+b=a'+b'\label{squaresamemiddle}
\end{align}
Let $u\in F_i(z)$ be a word that contains a square starting at $a$ and of period $b$ then we know its suffix of size 
$a+2b-1>l(n)$.
Thus there are at most $f(|\mu_{n+1}|)$ possibilities, moreover since the size of the unknown suffix is $l(n+1)+1-a-2b<p$, it is square-free
and there are at most $(|\Sigma|-1)^{l(n+1)+1-a-2b}$ possibilities.
Thus for a fixed $z$ and value of $a+b$, the number of ways to add a suffix to $z$ to obtain an element of $F_i$ that contains a square of period $b$ starting at $a$ is:
\begin{align*}
 &\max_{ \substack{j\in \mathbb{N}_0,\, l(n)+2\le p_1,\ldots, p_j\le l(n+1)+1,\\
 \forall i<j,\, p_{i+1}>p_{i}+|z|-a-b+1 }}
 \left\{\sum_{i=1}^{j}\min\left\{f(|\mu_{n+1}|),(|\Sigma|-1)^{l(n+1)+1-p_i}\right\} 
\right\}\\
&= \max_ {\substack{j\in \mathbb{N}_0,\, 0\le p_1,\ldots, p_j\le |\mu_{n+1}|-1,\\
 \forall i<j,\, p_{i+1}<p_{i}+|z|-a-b+1}}
 \left\{\sum_{i=1}^{j}\min\left\{f(|\mu_{n+1}|),(|\Sigma|-1)^{|\mu_{n+1}|-1-p_i}\right\}\right\}
\end{align*}
But since $\min\left\{f(|\mu_{n+1}|),(|\Sigma|-1)^{|\mu_{n+1}|-1-p_i}\right\}$ is a non-increasing function in $p_i$, the maximum is reached
when we pack the $p_i$ as much as we can on the lowest values of $p_i$. Thus this quantity is in fact equal to:
\begin{equation*}
\sum_{j=0}^{\left\lfloor\frac{|\mu_{n+1}|-1}{|z|-a-b+2}\right\rfloor}\min\left\{f(|\mu_{n+1}|),(|\Sigma|-1)^{|\mu_{n+1}|-1-j(|z|-a-b+2)}\right\}
\end{equation*}

Now we can sum over all the values of $a+b$ and we get:
\begin{align}
|F_i(z)|&\le \sum_{a+b=|z|+2-|\mu_i|}^{|z|+1} \sum_{j=0}^{\left\lfloor\frac{|\mu_{n+1}|-1}{|z|-a-b+2}\right\rfloor}\min\left\{f(|\mu_{n+1}|),(|\Sigma|-1)^{|\mu_{n+1}|-1-j(|z|-a-b+2)}\right\}\nonumber\\
&\le \sum_{a+b=0}^{|\mu_{i}|-1} \left(\sum_{j=0}^{\left\lfloor\frac{|\mu_{n+1}|-1}{|\mu_i|-a-b}\right\rfloor}\min\left\{f(|\mu_{n+1}|),(|\Sigma|-1)^{|\mu_{n+1}|-1-j(|\mu_i|-a-b)}\right\}\right)\nonumber\\
&\le \sum_{m=1}^{|\mu_{i}|}\left( \sum_{j=0}^{\left\lfloor\frac{|\mu_{n+1}|-1}{m}\right\rfloor}\min\left\{f(|\mu_{n+1}|),(|\Sigma|-1)^{|\mu_{n+1}|-1-jm}\right\}\right)\nonumber\\
|F_i(z)|&\le \alpha(|\mu_{i}|,|\mu_{n+1}|)\nonumber
\end{align}
Summing over all the possible $z$ yields $|F_i|\le s_{l(i)}\alpha(|\mu_{i}|,|\mu_{n+1}|)$.

The remaining case is $i=d$.
Once again, let $a,b$ (resp. $a',b'$) be integers such that there is an element of $F_d(z)$ that contains a square starting at $a$ (resp. $a'$)
 and of period $b$ (resp. $b'$) with $l(d-1)+1<a'+b'=a+b\le  l(n+1)+1-p$ and $a>a'$.
By definition of $d$, $ l(n+1)- l(d)\le p$. 
We can use equation \eqref{squaresamemiddle} again and we get:
$$a'+2b'> b+|z|+1\ge p+|z|+1 \ge l(n+1)+1$$
This is a contradiction with the fact that $a'+2b'\le l(n+1)+1$.
Thus given the value of $a+b$ there is at most one possible value for $a$ and $b$.
The number of ways for a fixed $z$ and value of $a+b$ to complete $z$ with a suffix into an element of $F_d$ is at most:
\begin{align*}
 &\max\left\{\min\left\{f(|\mu_{n+1}|),(|\Sigma|-1)^{l(n+1)+1-s}\right\}: 
 \begin{array}{l} s\in\mathbb{N}_{>0},s\le l(n+1)+1,\\
 a+b+p\le s,\\
 \end{array}\right\}\\
 &\le \min\left\{f(|\mu_{n+1}|),(|\Sigma|-1)^{l(n+1)+1-a-b-p}\right\}
\end{align*}
Then by summing over all the possible values of $a+b$, we get:
$$ |F_d(z)|\le \sum_{a+b=l(d-1)+2}^{l(n+1)+1-p} \min\left\{f(|\mu_{n+1}|),(|\Sigma|-1)^{l(n+1)+1-a-b-p}\right\}$$
We can use the variable substitution $m=l(n+1)+1-a-b-p$ and remark that $l(n+1)+1-p-(l(d-1)+2)=|\mu_{d}\mu_{d+1}\ldots \mu_{n+1}|-p-1=r-1$ and we get:
$$ |F_d(z)|\le \sum_{m=0}^{r-1} \min\left\{f(|\mu_{n+1}|),(|\Sigma|-1)^{m}\right\}\le \alpha'(r,|\mu_{n+1}|)$$
We conclude that $|F_{d}|\le s_{l(d)}\alpha'(r,|\mu_{n+1}|)$ by summing over all $z$.
\end{proof}

\section{Finding a set \texorpdfstring{$L$}{L} that satisfies Theorem \ref{mainTh}}\label{secfindingL}
In this section, we explain how to verify the existence of a language that fulfills conditions (I),(II) and (III) of Theorem \ref{mainTh}.

We consider some particular directed labeled graphs: $G(V,A)$ is a set $V$ of vertices together with a set $A\subseteq(V\times V\times\Sigma)$ of labeled arcs.
For any $u,v\in V$ and $a\in\Sigma$, $(u,v,a)\in A$ is an arc from $u$ to $v$ with label $a$.
These graphs could also be seen as finite state machines where all the states are initial and final.

The Rauzy graph of length $n$ of a factorial language $L$ over $\Sigma$ is the graph $G(V, A)$ where $V = L\cap \Sigma^n$ and
$E=\{(au,ub, b):  aub\in L, a,b\in\Sigma \}$. 
For any graph $G(V,A)$ and any set $X\in V$, we denote by $G[X]$ the subgraph induced by $X$.

Let $R_p(\Sigma)$ be the Rauzy graph of length $2p-3$ of the square-free words over $\Sigma$.
Remark, that the factors of length $2p-2$ of any walk on $R_p(\Sigma)$ correspond to edges of $R_p(\Sigma)$ and by definition they are square free.
Thus, the sequence of labels of any walk on $R_p(\Sigma)$ avoids squares of period less than $p$, but can contain longer squares. 
We let $S_p(\Sigma)$ be the set of words that contains no square of period less than $p$ (from the previous remark $S_p(\Sigma)$ can also be seen as the set of walks on $R_p(\Sigma)$).

As an illustration, we give $R_3(\{0,1,2\})$ in Fig. \ref{fullrauzy} without the arc labels.

\begin{figure}[ht]
\begin{center}
\includegraphics[scale=0.8]{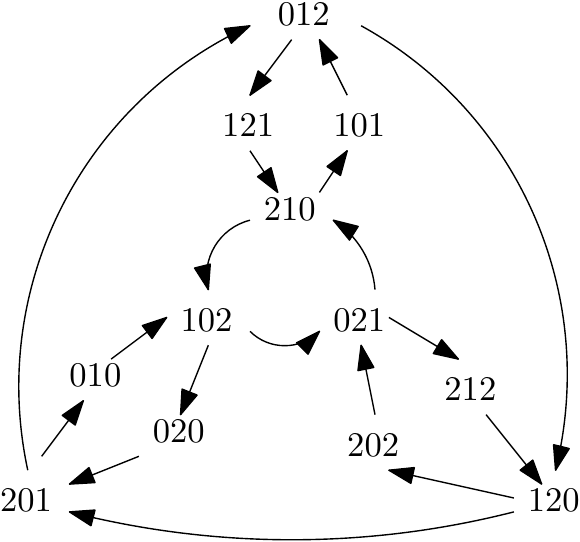}
\end{center}
\caption{The Rauzy graph $R_3(\{0,1,2\})$.\label{fullrauzy}}
\end{figure}

For this Section, we abuse the notation and allow ourself to identify words and sequences.

For any graph $G(V,A)$ and partial word $w\in(\Sigma\cup\{\diamond\})^*$,
we define inductively for any integer $i\in\{0,\ldots, |w|\}$ and vertex $v\in V$: 
$$p_{i,w,G}(v)=\left\{\begin{array}{ccl} 
 1 &&\text{if }i=0,\\  
 \sum\limits_{\substack{(v,u,a)\in A\\a\in \Sigma}}p_{i-1,w,G}(u) &&\text{if }w_{|w|+1-i}=\diamond,\\
 \sum\limits_{\substack{(v,u,w_{|w|+1-i})\in A}}p_{i-1,w,G}(u) &&\text{otherwise.}            
\end{array}\right.
$$
Intuitively, $p_{i,w,G}(v)$ gives the number of walks of length $i$ starting from $v$
that are compatible with the $i$ last letters of $w$.
Indeed, there is one walk of length $0$ and we always take the transition that is labeled by the current letter of $w$ and any transition if this letter is $\diamond$.
Remark that in the third case, there are in fact either $0$ or $1$ summands in the sum.
\begin{lemma}\label{firstRauzy}
Let $W\subseteq(\Sigma\cup\{\diamond\})^*$, $G(V,A)=R_p(\Sigma)$, $f:\mathbb{N}_{>0}\rightarrow\mathbb{N}_{>0}$ and a non-empty set $X\subseteq V$.
If  for all $v\in X$ and $w\in W$, $p_{|w|,w,G[X]}(v)\ge f(|w|)$,
then there exists a language $L$ such that:
 \begin{itemize}
 \item $\varepsilon\in L$,
  \item for all $u\in L$, $u$ avoids squares of period less than $p$,
  \item for any $u\in L$ and $w\in W$ there are at least $f(|w|)$ different words $v\in\Sigma^{|w|}$ compatible with $w$ such that $uv\in L$.
 \end{itemize}
\end{lemma}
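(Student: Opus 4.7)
The plan is to construct $L$ as the set of label sequences read along walks in $G[X]$ that start from a single fixed vertex of $X$, and then verify the three required properties one by one.

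First I would pick an arbitrary vertex $v_0\in X$ (which exists since $X$ is non-empty) and define $L$ to be the set of words $a_1\ldots a_k$ for which there exists a walk $v_0\to w_1\to\ldots\to w_k$ entirely inside $G[X]$ whose $i$-th edge carries the label $a_i$. The empty walk ($k=0$) is allowed, so $\varepsilon\in L$, which gives (I). For (II), every $u\in L$ is the label sequence of a walk in $G[X]\subseteq R_p(\Sigma)$, and the paragraph preceding the lemma already observes that any such sequence avoids squares of period less than $p$, so this property is immediate.

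For (III), the key observation is that in any Rauzy graph the endpoint of an edge is determined by its origin and its label: reading a label $b$ from a vertex $au$ can only land on $ub$. Consequently, walks from a fixed starting vertex are in bijection with their label sequences. Hence, for $u\in L$ whose walk ends at some $w_k\in X$, the quantity $p_{|w|,w,G[X]}(w_k)$ counts exactly the number of distinct words $v\in\Sigma^{|w|}$ compatible with $w$ such that $uv$ remains the label sequence of a walk in $G[X]$ starting at $v_0$ (concatenate the walk for $u$ with the walk from $w_k$ realizing $v$). By the hypothesis this count is at least $f(|w|)$, yielding (III).

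The only step that requires care, and the one I expect to be the main thing to justify explicitly, is this walk-to-label bijection: without it, $p_{|w|,w,G[X]}(w_k)$ could count several distinct walks that happen to share the same label sequence, and property (III) demands distinct \emph{words} $v$. Once the bijection is pinned down, the three conditions fall out directly from the construction and no further combinatorial work is needed.
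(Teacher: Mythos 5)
Your proof is correct and follows essentially the same route as the paper: take $L$ to be the label sequences of walks in $G[X]$, use the fact that each vertex of the Rauzy graph has at most one outgoing arc per label so that walks from a fixed vertex correspond bijectively to their label sequences, and lower-bound the number of extensions by $p_{|w|,w,G[X]}$ at the walk's endpoint. Fixing a single start vertex $v_0$ is a harmless variant (it makes the endpoint of the walk realizing $u$ unique, whereas the paper recovers it as a length-$(2p-3)$ suffix or chooses a compatible vertex when $u$ is short), and the walk-to-label bijection you flag as the key point is indeed the same determinism property the paper relies on.
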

\begin{proof}
Let $L$ be the set of sequences of labels that correspond to a walk in $G[X]$.
By definition, the two first conditions on $L$ are fulfilled.
 
Let  $u\in L$ and $w\in W$.
If $|u|\ge2p-3$ we let $u'$ be the suffix of length $2p-3$ of $u$.
Otherwise, we let $u'\in L$ such that  $|u'|=2p-3$ and $w$ is a suffix of $u'$ (there is such an element in $L$).
Each walk of length $|w|$ starting in $u'$ gives a unique sequence of labels $u''$ such that $uu''$ contains no square of period less than $p$. 

We easily deduce, by induction on $i$, that for all $v$ the number of walks of length $i$ starting at $v$ that are compatible with $w_{|w|-i+1}w_{|w|-i+2}\ldots w_{|w|}$
is at least $p_{i,w,G[X]}(v)$.
So, in particular, the number of walks of length $|w|$ starting at $u'$ and compatible with $w$ is at least $p_{|w|,w,G[X]}(u')\ge f(|w|)$.
This concludes the proof.
\end{proof}

In fact, we need something stronger because for the values of $p$ that we use the graphs $R_p$ are too big to fit in a computer.
We can exploit symmetries of $R_p(\Sigma)$ to work on a smaller equivalent graph.

For any square-free word $w\in\Sigma^*$, we let $\Psi(w)$ be the shortest suffix of $w$ such that for all $i\in\{\lceil\frac{|\Psi(w)|}{2}\rceil+1,\ldots,p-1\}$ there exists $k\in\{0,1,\ldots, |\Psi(w)|-i-1\}$ with $\Psi(w)_{|\Psi(w)|-k}\not=\Psi(w)_{|\Psi(w)|-k-i}$.
If $|w|=2p-3$, then $w$ is such a suffix of itself (since $\{\lceil\frac{|w|}{2}\rceil+1,\ldots,p-1\}$ is empty) and thus there is always a shortest suffix.

For instance, with $p=5$ we have $\Psi(0210120)=10120$.  For any letter $\alpha$, the word $0210120\alpha$ is square free if and only if  $10120\alpha$ is square free. Indeed, a new square is necessarily a suffix of $0210120\alpha$, it is enough to look at the two letters in bold in $\mathbf{1}012\mathbf{0}\alpha$ to deduce that there is no square of length $4$ and all the other possible suffix of even length of $0210120\alpha$ are also suffixes of $10120\alpha$.
In fact, for any $w$ of size $2p-3$, the word $wa$ avoids squares if and only if $\Psi(w)a$ avoids squares  (this is proven in the next lemma) and this is the main motivation behind the definition of $\Psi$ (in particular, words with the same image by $\Psi$ can be extended in the same way).

For any graph $G(V,A)$, let $\Psi(G)(\Psi(V),A')$ be the graph such that $A'=\Psi(A)=\{(\Psi(a),\Psi(b),c): (a,b,c)\in A\}$. The next lemma tells us that we only need to consider the walks on $\Psi(G)$ instead of the walks on $G$.

\begin{lemma}\label{lemmaforf}
Let $p$ be a positive integer and $w\in(\Sigma\cup\{\diamond\})^*$,
$G(V,A)=R_p(\Sigma)$ and $X\subseteq \Psi(V)$.
Let $\Psi^{-1}(X)=\{x\in V: \Psi(x)\in X\}$.
Then for all $v\in \Psi^{-1}(X)$, $p_{|w|,w,G[\Psi^{-1}(X)]}(v)= p_{|w|,w,\Psi(G)[X]}(\Psi(v))$.
\end{lemma}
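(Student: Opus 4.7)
The plan is to induct on $i$, holding $w$ fixed, and prove that $p_{i,w,G[\Psi^{-1}(X)]}(v) = p_{i,w,\Psi(G)[X]}(\Psi(v))$ for every $v \in \Psi^{-1}(X)$ and every $i \in \{0,\ldots,|w|\}$. The base case $i=0$ is immediate since both quantities equal $1$ by the first clause of the recursive definition.

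For the inductive step I would unfold the recursive definition on each side. This expresses $p_{i,w,G[\Psi^{-1}(X)]}(v)$ as a sum of terms $p_{i-1,w,G[\Psi^{-1}(X)]}(u)$ ranging over the outgoing arcs $(v,u,a)$ of $v$ inside $G[\Psi^{-1}(X)]$ whose label $a$ is compatible with $w_{|w|+1-i}$, and symmetrically for $\Psi(G)[X]$. By the induction hypothesis, the summand $p_{i-1,w,G[\Psi^{-1}(X)]}(u)$ coincides with $p_{i-1,w,\Psi(G)[X]}(\Psi(u))$ whenever $u \in \Psi^{-1}(X)$. Hence everything reduces to exhibiting a label-preserving bijection between the outgoing arcs of $v$ in $G[\Psi^{-1}(X)]$ and those of $\Psi(v)$ in $\Psi(G)[X]$, sending $(v,u,a)$ to $(\Psi(v),\Psi(u),a)$. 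The forward direction is well-defined by the very construction $A' = \Psi(A)$, and the equivalence $u \in \Psi^{-1}(X) \Leftrightarrow \Psi(u) \in X$ preserves the restriction to the induced subgraph.

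What remains, and what I expect to be the main obstacle, is proving that this map is surjective and that $\Psi(G)$ carries at most one arc with a given label out of $\Psi(v)$. Both points reduce to the following \emph{$\Psi$-consistency} claim: whenever $\Psi(v) = \Psi(v')$ and $a \in \Sigma$, $va$ is square-free iff $v'a$ is, and in that case $\Psi(v_2 \ldots v_{2p-3} a) = \Psi(v'_2 \ldots v'_{2p-3} a)$. This is precisely the structural feature of $\Psi$ asserted (without proof) in the motivating discussion before the lemma. To establish it, I would observe that a new square suffix of $va$ has some period $i \in \{1,\ldots,p-1\}$ and depends only on the last $2i$ letters of $va$. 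For $i$ small enough that $2i-1 \le |\Psi(v)|$, this test lies entirely within $\Psi(v)\cdot a$, hence depends only on $\Psi(v)$ and $a$; for the larger $i$, the defining clause of $\Psi(v)$ supplies two positions inside $\Psi(v)$ at distance $i$ carrying distinct letters, which rules out a square of period $i$ regardless of $a$. An analogous suffix analysis applied to the successor $v_2\ldots v_{2p-3}a$ shows that its relevant suffix (for the purpose of $\Psi$) is a suffix of $\Psi(v)\cdot a$, so its $\Psi$-image is likewise a function of $\Psi(v)$ and $a$ alone.

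Once $\Psi$-consistency is in place, the inductive step collapses to a one-to-one match of outgoing arcs labelled $a$ on the two sides, with corresponding targets $u$ and $\Psi(u)$; applying the induction hypothesis termwise yields the desired equality. The delicate piece of the argument is the case split on the size of $i$ relative to $|\Psi(v)|$ inside the $\Psi$-consistency step; everything else is a mechanical unrolling of the recursion together with the defining identity $A' = \Psi(A)$.
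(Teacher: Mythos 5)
Your proposal follows essentially the same route as the paper: an induction on $i$ whose inductive step reduces to a label-preserving correspondence between the outgoing arcs of $v$ and of $\Psi(v)$, with the crux being the structural claim that square-free extendability by a letter $a$ and the $\Psi$-image of the successor depend only on $\Psi(v)$ and $a$, proved by the same case split on the period ($2i-1\le|\Psi(v)|$ versus larger, where the defining mismatch property of $\Psi$ kills the square). The paper organizes this as an explicit set equality between $\{(\Psi(u),\Psi(v),a)\in\Psi(A)\}$ and $\{(\Psi(u),\Psi(v),a):(u,v,a)\in A\}$ plus determinism inherited from $R_p(\Sigma)$, but the content is the same as your bijection argument.
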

\begin{proof}
Let us first show that for any $a\in\Sigma$ and $v\in S_p(\Sigma)$ with $|v|=2p-3$ if $\Psi(v)a\in S_p(\Sigma)$ then $va\in S_p(\Sigma)$.
Let us show that under these assumptions, for any $i$, $va$ avoids squares of period $i$.
Since $v$ is square free, we only need to show that no suffix of $va$ is a square.
We have to distinguish between two cases:
\begin{itemize}
 \item  $2i\le|\Psi(v)|+1$. Suppose for the sake of contradiction that there is a square of period $i$ in $va$. We deduce that the suffix of length $|\Psi(v)|+1$ of $va$ contains a square of period $i$. That is, $\Psi(v)a$ contains a square of period $i$ which is a contradiction.
 \item $2i\ge|\Psi(v)|+2$. Since $i$ is an integer we get $i\ge\lceil\frac{|\Psi(v)|}{2}\rceil+1$.
 Moreover $|va|=2p-2$ and thus $i\le p-1$. Thus by definition of $\Psi(v)$, there exists
 $k\in\{0,1,\ldots,  |\Psi(v)|-i-1\}$ such that $\Psi(v)_{|\Psi(v)|-k}\not=\Psi(v)_{|\Psi(v)|-k-i}$.
 Thus there is $k\in\{1,\ldots,  |\Psi(v)a|-i-1\}$ such that $(\Psi(v)a)_{|\Psi(v)a|-k}\not=(\Psi(v)a)_{|\Psi(v)a|-k-i}$. Remark, that $|\Psi(v)a|-i-1=|\Psi(v)|-i\le 2i-2-i= i-2$. We conclude that there is $k\in\{0,\ldots, i-2\}$ such that $(va)_{|va|-k}\not=(va)_{|va|-k-i}$.
 This implies that the suffix of length $2i$ of $va$ is not a square of period $i$.
\end{itemize}
We deduce that for any $a\in\Sigma$ and $v\in S_p(\Sigma)$ with $|v|=2p-3$ if $\Psi(v)a\in S_p(\Sigma)$ then $va\in S_p(\Sigma)$.

Let $u\in V$, $v\in\Psi(V)$ and $a\in\Sigma$ such that $(\Psi(u),v,a)\in \Psi(A)$.
By definition of $\Psi(A)$ this implies that there is $(u',v',a)\in A$ with $\Psi(u')=\Psi(u)$ and $\Psi(v')=v$.
Thus $u'a\in S_p(\Sigma)$ and  $\Psi(u)a=\Psi(u')a\in S_p(\Sigma)$.
From the previous paragraph, it implies that $ua$ is square-free. 
Let us show that $\Psi(u_2u_3\ldots u_{|u|}a)=v$.
By definition,
for all $i\in\{\lceil\frac{|\Psi(u')|}{2}\rceil+1,\ldots,p-1\}$ there exists $k\in\{0,1,\ldots, |\Psi(u')|-i-1\}$ with $\Psi(u')_{|\Psi(u')|-k}\not=\Psi(u')_{|\Psi(u')|-k-i}$. 
We easily deduce that
for all $i\in\{\lceil\frac{|\Psi(u')a|}{2}\rceil+1,\ldots,p-1\}$ there exists $k\in\{0,1,\ldots, |\Psi(u'a)|-i-1\}$ with $\Psi(u'a)_{|\Psi(u'a)|-k}\not=\Psi(u'a)_{|\Psi(u'a)|-k-i}$. 
This implies that $v=\Psi(v')$ is a suffix of $\Psi(u')a$. Since $\Psi(u')a= \Psi(u)a$, we deduce that $v$ is also a suffix of $u_2u_3\ldots u_{|u|}a$. Since $v=\Psi(v')$ and $v$ is a suffix of $u_2u_3\ldots u_{|u|}a$, we get that $\Psi(u_2u_3\ldots u_{|u|}a)=v$. 
We showed that if there are $u\in V$, $v\in\Psi(V)$ and $a\in\Sigma$ such that $(\Psi(u),v,a)\in \Psi(A)$, then there  exists $v''$ such that  $(u,v'',a)\in A$ and $\Psi(v'')=v$.

We deduce that for all $u\in V$, $\{(\Psi(u),\Psi(v),a)\in \Psi(A)\}\subseteq \{(\Psi(u),\Psi(v),a): (u,v,a)\in A\}$.
The other inclusion is clear from the definition of $\Psi(A)$ and we get for all $u\in V$ :
\begin{equation}\label{eqfA}
 \{(\Psi(u),\Psi(v),a)\in \Psi(A)\}= \{(\Psi(u),\Psi(v),a): (u,v,a)\in A\}
\end{equation}
By definition of $R_p(\Sigma)$, for any $u$ there is at most one outgoing arc for every label in the set of the right. Since the two sets are equals, we deduce that every vertex of the graph $\Psi(G)$ has at most one outgoing arc for any label. 
Intuitively, \eqref{eqfA} implies (by induction on the length of the walk) that for any $u,v\in V$ the set of labeled walks from $u$ to $v$ in $G$  is equal to the set of labeled walks from $\Psi(u)$ to $\Psi(v)$ in $\Psi(G)$.
We are now ready to show by induction on $i$ that for all $i\in\{0,\ldots, |w|\}$ and $v\in \Psi^{-1}(X)$, $p_{i,w,G[\Psi^{-1}(X)]}(v)= p_{i,w,\Psi(G)[X]}(\Psi(v))$.
By definition $p_{0,w,G[\Psi^{-1}(X)]}(v)=1= p_{0,w,\Psi(G)[X]}(\Psi(v))$. 

Let $n$ be a positive integer such that for all $v\in\Psi^{-1}(X)$,
 \begin{equation}\tag{IH}\label{IHend}
  \forall i<n,\, p_{i,w,G[\Psi^{-1}(X)]}(v)= p_{i,w,\Psi(G)[X]}(\Psi(v)).
 \end{equation}
Then, for all $v\in \Psi^{-1}(X)$, if $w_{|w|+1-i}\not=\diamond$ we get:
\begin{align*}
p_{i,w,G[\Psi^{-1}(X)]}(v)&=\sum\limits_{\substack{(v,u,w_{|w|+1-i})\in A\\u\in \Psi^{-1}(X)}}p_{i-1,w,G[\Psi^{-1}(X)]}(u)\\
p_{i,w,G[\Psi^{-1}(X)]}(v)&=\sum\limits_{\substack{(v,u,w_{|w|+1-i})\in A\\\Psi(u)\in X}}p_{i-1,w,\Psi(G)[X]}(\Psi(u))\text{\ \ \ (From \eqref{IHend})}\\
p_{i,w,G[\Psi^{-1}(X)]}(v)&=\sum\limits_{\substack{(\Psi(v),\Psi(u),w_{|w|+1-i})\in \Psi(A)\\\Psi(u)\in X}}p_{i-1,w,\Psi(G)[X]}(\Psi(u)) \text{\ \ \ (From \eqref{eqfA})}\\
p_{i,w,G[\Psi^{-1}(X)]}(v)&=\sum\limits_{\substack{(\Psi(v),u,w_{|w|+1-i})\in \Psi(A)\\u\in X}}p_{i-1,w,\Psi(G)[X]}(u)\\
p_{i,w,G[\Psi^{-1}(X)]}(v)&=p_{i,w,\Psi(G)[X]}(\Psi(v))
\end{align*}
The case where $w_{|w|+1-i}=\diamond$ is similar.
\end{proof}

Using Lemma \ref{firstRauzy} together with Lemma \ref{lemmaforf}, we get the following lemma:
\begin{lemma}\label{lemmagraphf}
Let $W\subseteq(\Sigma\cup\{\diamond\})^*$, $G(V,A)=R_p(\Sigma)$, $f:\mathbb{N}_{>0}\rightarrow\mathbb{N}_{>0}$ and $X\subseteq \Psi(V)$ be a non-empty set.
If  for all $v\in X$ and $w\in W$, $p_{|w|,w,\Psi(G)[X]}(v)\ge f(|w|)$,
then there exists a language $L$ such that:
 \begin{itemize}
 \item $\varepsilon\in L$,
  \item for all $u\in L$, $u$ avoids squares of period less than $p$,
  \item for any $u\in L$ and $w\in W$ there are at least $f(|w|)$ different words $v\in\Sigma^{|w|}$ compatible with $w$ and such that
  $uv\in L$.
 \end{itemize}
\end{lemma}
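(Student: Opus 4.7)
The plan is to reduce Lemma \ref{lemmagraphf} directly to Lemma \ref{firstRauzy} by transporting the hypothesis from $\Psi(G)[X]$ back to an induced subgraph of $G$ via Lemma \ref{lemmaforf}. Concretely, I would set $X' = \Psi^{-1}(X) \subseteq V$ and try to apply Lemma \ref{firstRauzy} with this $X'$ in place of the original $X$.

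First I would verify that $X'$ is non-empty: since $X \subseteq \Psi(V)$ and $X$ is non-empty, every element of $X$ has at least one preimage under $\Psi$, so $\Psi^{-1}(X) \neq \emptyset$. Next, for any $v \in X'$ and $w \in W$, Lemma \ref{lemmaforf} gives
\[
p_{|w|,w,G[\Psi^{-1}(X)]}(v) \;=\; p_{|w|,w,\Psi(G)[X]}(\Psi(v)).
\]
Because $v \in \Psi^{-1}(X)$ we have $\Psi(v) \in X$, so the assumption of Lemma \ref{lemmagraphf} yields $p_{|w|,w,\Psi(G)[X]}(\Psi(v)) \ge f(|w|)$. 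Combining these two facts gives $p_{|w|,w,G[X']}(v) \ge f(|w|)$ for every $v \in X'$ and every $w \in W$, which is precisely the hypothesis of Lemma \ref{firstRauzy} applied to the non-empty set $X' \subseteq V$. Invoking Lemma \ref{firstRauzy} then produces a language $L$ with the three required properties, completing the proof.

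There is essentially no main obstacle: all the real content has already been packaged into the two preceding lemmas. Lemma \ref{lemmaforf} translates walk counts between $\Psi(G)[X]$ and $G[\Psi^{-1}(X)]$, and Lemma \ref{firstRauzy} turns any such subgraph lower bound into a language, so only the simple bookkeeping above is needed. The only subtlety to watch is to confirm non-emptiness of $\Psi^{-1}(X)$ and to make sure the identification of $X'$ as the preimage set matches exactly the induced subgraph used in Lemma \ref{lemmaforf}, both of which are immediate.
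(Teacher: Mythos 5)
Your proposal is correct and is exactly the argument the paper intends: the paper gives no explicit proof, simply stating that the lemma follows from combining Lemma \ref{firstRauzy} with Lemma \ref{lemmaforf}, and your reduction via $X'=\Psi^{-1}(X)$ is the straightforward way to do so. The non-emptiness check and the transfer of the walk-count bound are handled correctly.
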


The graph $\Psi(R_p(\Sigma))$ is much smaller than $R_p(\Sigma)$ and we can use a computer to check the conditions of this lemma for the values of $p$ that we used.
One should first find the graph $\Psi(R_p(\Sigma))$. 
The following fact allows us to easily compute the set of vertices of $\Psi(R_p(\Sigma))$  without computing $R_p(\Sigma)$:
\begin{lemma}\label{compgfr}
Let $w\in S_p(\Sigma)$.
Then $w \in \Psi(S_p(\Sigma)\cap\Sigma^{2p-3})$ if and only if $w$ is the smallest non-empty suffix of $w$
such that $\Psi(w)=w$.
\end{lemma}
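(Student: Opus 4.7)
The plan is to prove both directions of the equivalence by reducing to a condition I will call \emph{$p$-reducedness}: a word $u$ is $p$-reduced if for all $i \in \{\lceil |u|/2 \rceil + 1, \ldots, p-1\}$ there exists $k \in \{0, \ldots, |u| - i - 1\}$ with $u_{|u|-k} \neq u_{|u|-k-i}$. By definition, $\Psi(u)$ is the shortest $p$-reduced suffix of $u$. I first observe that the condition ``$w$ is the smallest non-empty suffix $s$ of $w$ with $\Psi(s) = s$'' is equivalent to the simpler $\Psi(w) = w$: if $\Psi(w) = w$ then no strict suffix of $w$ is $p$-reduced, and therefore none of them can equal its own $\Psi$-image.

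For the forward direction, if $w = \Psi(v)$ with $v \in S_p(\Sigma) \cap \Sigma^{2p-3}$, then $w$ is $p$-reduced by the defining condition, which depends only on $w$ itself. Any strict suffix of $w$ is also a strict suffix of $v$; if it were $p$-reduced it would be a shorter $p$-reduced suffix of $v$, contradicting $\Psi(v) = w$. Hence $\Psi(w) = w$.

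For the reverse direction, assume $\Psi(w) = w$. I would first observe that $|w| \leq 2p-3$, since any longer word has its length-$(2p-3)$ suffix vacuously $p$-reduced (the range of $i$ becomes empty), which would contradict $w$ being the shortest $p$-reduced suffix of itself. If $|w| = 2p-3$ take $v = w$; otherwise extend $w$ on the left to some $v \in S_p(\Sigma) \cap \Sigma^{2p-3}$ having $w$ as a suffix. Since $w$ is then a $p$-reduced suffix of $v$ we get $|\Psi(v)| \leq |w|$, and since no strict suffix of $w$ is $p$-reduced we get $|\Psi(v)| \geq |w|$, so $\Psi(v) = w$.

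The hard part will be producing the left extension: not every finite square-free word admits extensions of arbitrary length over small alphabets, so the existence of $v$ is not immediate. I would handle this by exploiting either the structural constraints on $w$ coming from $\Psi(w) = w$ (which forces enough mismatches in $w$ for any blocking squares to be short and therefore resolvable by careful prepending) or general extension properties of $S_p(\Sigma)$ for the alphabets considered in the paper.
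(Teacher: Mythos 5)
The paper states this lemma without proof (it is introduced as a ``fact''), so there is no official argument to compare against; I can only assess your proposal on its own. Your reduction of the right-hand condition to the single equation $\Psi(w)=w$ is correct, your forward direction is correct and complete, and your observation that $|w|\le 2p-3$ whenever $\Psi(w)=w$ is also right. You have, moreover, put your finger on exactly the right difficulty: the reverse direction needs every such $w$ to be left-extendable inside $S_p(\Sigma)$ to length $2p-3$.

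That step, which you leave as ``the hard part'' with only a sketch of possible strategies, is a genuine gap --- and in fact it cannot be filled, because the reverse implication is false as stated. Take $\Sigma=\{0,1,2\}$, $p=6$ (so $2p-3=9$) and $w=0102010$. This $w$ is square-free, hence in $S_6(\Sigma)$. One checks $\Psi(w)=w$: for $|w|=7$ the only index in $\{\lceil 7/2\rceil+1,\dots,5\}$ is $i=5$, witnessed by $w_7=0\neq 1=w_2$, while every proper suffix fails (the length-$6$ suffix $102010$ has no mismatch for $i=4$, and suffixes of length at most $5$ leave an empty range of $k$ for $i=5$). So $w$ is the smallest non-empty suffix of itself fixed by $\Psi$. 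Yet no word $y\,0102010$ lies in $S_6(\Sigma)$: $y=0$ creates $00$ (period $1$), $y=1$ creates $1010$ (period $2$), and $y=2$ creates $20102010$ (period $4$), all periods less than $6$. Hence no word of length $9$ in $S_6(\Sigma)$ admits $w$ as a suffix, so $w\notin\Psi\bigl(S_6(\Sigma)\cap\Sigma^{9}\bigr)$ even though the right-hand side of the equivalence holds. Neither of your two proposed repairs (structural consequences of $\Psi(w)=w$, or generic extension properties of $S_p$) can succeed, since the claim itself fails; at best the lemma yields a superset of the true vertex set, which is what the paper's algorithm implicitly relies on.
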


Moreover, given a graph $G$, the definition of $p_{|w|,w,G}$ gives a trivial dynamic algorithm that computes $p_{|w|,w,G}$ in time $O(|\Sigma|\cdot |w|\cdot|G|)$.
Starting with $X=\Psi(R_p(\Sigma))$  and inductively removing from $X$ all the vertices for which $p_{|w|,w,\Psi(R_p(\Sigma))[X]}< f(|w|)$  gives the largest subgraph that meets the conditions of
Lemma \ref{lemmagraphf}. As long as this subgraph is not empty one can then apply Lemma \ref{lemmagraphf}.
Algorithm \ref{algosubgraph} computes the largest subgraph of $\Psi(G)$ with the required property.
\begin{algorithm}[ht]
\SetKwInOut{Input}{Input}
\SetKwInOut{Output}{Output}
\Input{The graph $\Psi(G)$, the set $W$}
\Output{The largest set $X\subseteq \Psi(V)$ such that for all $v\in X$ and $w\in W$, $p_{|w|,w,\Psi(G)[X]}(v)\ge f(|w|)$}

$X=\Psi(V)$\;
$todo := true$\;
\While{todo}{
$todo := false$\;
\ForEach{$w\in W$}{
compute $p_{|w|,w,\Psi(G)[X]}$\;
$X':=\{v\in X: p_{|w|,w,\Psi(G)[X]}(v)\ge f(|w|)\}$\;
\If{$X\not=X'$}{
$X:=X'$\;
$todo := true$\;
}
}
}
\KwRet $X$\;
\caption{How to compute the subgraph of $\Psi(G)$.\label{algosubgraph}}
\end{algorithm}

\section{Application of Theorem \ref{mainTh}}\label{secconc}
In this section we apply Theorem \ref{mainTh}.
We provide a C++ implementation of Algorithm \ref{algosubgraph} that verifies the existence of the language that fulfill conditions (I),(II) and (III) from Theorem \ref{mainTh}.
Condition (IV) can be easily verified (as long as solutions are given) and we also provide a C++ code to do that.
\footnote{The codes can be found in the ancillary files of \url{https://arxiv.org/abs/1903.04214}} 
\begin{theorem}
For any alphabet $\Sigma$, let $d(\Sigma)$ be the smallest integer such that for all $v\in\Sigma^\omega$ and 
for all sequence $(p_i)_{1\le i}$ such that $\forall i$, $p_{i+1}-p_i\ge d(\Sigma)$, there is an infinite square-free
word $u\in\Sigma^\omega$ such that $v_i= u_{p_i}$. Then:
\begin{itemize}
 \item $7\le d(\{0,1,2\})\le19$,
 \item $d(\{0,1,2,3\})=3$,
 \item $2\le d(\{0,1,2,3,4\})\le3$,
 \item if $|\Sigma|\ge6$, $d(\Sigma)=2$.
\end{itemize}
\end{theorem}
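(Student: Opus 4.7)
The plan is to prove each of the four bullets separately. All three non-trivial upper bounds will be obtained by applying Theorem \ref{mainTh} to a suitably chosen finite set $W$ of partial words, while the lower bounds will be obtained either trivially or by exhibiting explicit bad partial-word instances. The upper-bound template is the same in every case: package an arbitrary admissible partial word $\mu$ as an element of $W^\omega$, produce the language $L$ required in conditions (I)--(III) by applying Lemma \ref{lemmagraphf} (and Algorithm \ref{algosubgraph}) to the quotient Rauzy graph $\Psi(R_p(\Sigma))$, and verify the system in condition (IV) numerically for well-chosen values $x_i\in\,]0,1[$.

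Concretely, for an alphabet $\Sigma$ and a claimed upper bound $D$, I would take
\[
W=\{a\diamond^{D-1}:a\in\Sigma\}\cup\{\diamond\},
\]
so that $\max\{|w|:w\in W\}=D$ and one sets $p=2D$. Any partial word whose forced positions are at pairwise distance at least $D$ factors in $W^\omega$ by emitting, at each forced position, a block $a\diamond^{D-1}$ that absorbs the forced letter and the $D-1$ mandatory wildcards following it, and filling every remaining wildcard with a length-one block $\diamond$. All that then remains is to pick $f(1)$ and $f(D)$, to check via Algorithm \ref{algosubgraph} that the induced subgraph of $\Psi(R_{2D}(\Sigma))$ is non-empty (this supplies $L$), and to produce $x_1,x_D\in\,]0,1[$ for which condition (IV) holds; condition (IV) itself collapses to a short inequality in those two variables because $\beta$ is computed by the dynamic program $\beta(0)=1$, $\beta(j)=\max\{x_{|u|}\beta(j-|u|):u\in W,|u|\le j\}$. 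I would treat the easy cases first ($|\Sigma|=6$ with $D=2,p=4$ and $|\Sigma|\in\{4,5\}$ with $D=3,p=6$, both on tiny Rauzy graphs, with the full range $|\Sigma|\ge 6$ then following by monotonicity in $|\Sigma|$), and then tackle the decisive case $|\Sigma|=3,\,D=19,\,p=38$. The main obstacle is this last case: $\Psi(R_{38}(\{0,1,2\}))$ is the largest graph involved, and one must both confirm that Algorithm \ref{algosubgraph} does not strip it of all vertices under the chosen $f(19)$ and locate values of $x_1,x_{19}$---roughly $x_{19}\approx f(19)^{-1}$, slightly enlarged to absorb the $\alpha,\alpha'$ correction terms---satisfying (IV). This is precisely what the author's C++ programs are designed to certify.

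For the lower bounds, $d(\Sigma)\ge 2$ is immediate: forcing $v_1=v_2$ with $p_2=p_1+1$ creates the square $v_1v_2$. The bounds $d(\{0,1,2\})\ge 7$ and $d(\{0,1,2,3\})\ge 3$ are finite statements: by K\"onig's lemma it suffices, for the stated minimum spacing, to exhibit a sequence of forced letters and a length $N$ such that every word of length $N$ compatible with the resulting partial word contains a square. I would obtain both by an exhaustive backtracking search with aggressive pruning---abort a branch as soon as the next forced position cannot be reached without creating a square---guided in the ternary case by the extremal examples produced in \cite{mainquestionpandd} for spacing $5$. The hard part is keeping the search tree manageable at spacing $6$ over three letters, which one handles by first restricting to short periodic forcing sequences before exploring more complicated ones.
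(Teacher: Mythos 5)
Your overall architecture (Theorem \ref{mainTh} plus Lemma \ref{lemmagraphf} and Algorithm \ref{algosubgraph} for the upper bounds, finite exhaustive searches over periodic forcings for the lower bounds) is exactly the paper's, and your lower-bound plan is essentially what the paper does: it checks that only $4281$ (resp.\ $636$) square-free words are compatible with $(0\diamond^51\diamond^52\diamond^5)^\omega$ (resp.\ $(0\diamond1\diamond2\diamond3\diamond)^\omega$). But your concrete choice of $W$ contains a fatal flaw: you put the forced letter \emph{first} in each block. With $w=a\diamond^{D-1}\in W$, condition (II) of Theorem \ref{mainTh} requires that \emph{every} $u\in L$ admit at least $f(D)\ge 1$ extensions $v$ with $v_1=a$, and this for \emph{every} $a\in\Sigma$; but if $u$ ends in the letter $a$, then any such $uv$ contains the square $aa$ of period $1<p$, so $uv\notin L$ and there are zero admissible extensions. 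No positive $f$ can satisfy (II) with your $W$. The paper's blocks are of the form $\diamond^i a$ (or $\diamond\diamond a\diamond$, etc.) --- wildcards first, forced letters never in the initial position --- precisely so that a walk in the Rauzy graph can branch before it has to realize a forced letter.

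Even after flipping the blocks, your parameter choices are very unlikely to satisfy condition (IV). Taking $p=2D$ is only the bare minimum allowed by the hypothesis $p\ge 2\max\{|w|\}$; in the ternary case with $W=\{\diamond\}\cup\{\diamond^{18}a\}$ and $p=38$, the choice $u=v=\diamond^{18}a$, $r=19$ makes the argument of $\beta$ equal to $19$, hence $\beta(19)\ge x_{19}$, and already the single term $\beta(19)\,\alpha'(19,19)\approx x_{19}\cdot 297$ is of the same order as $f(19)$ itself, so the inequality $f(19)-(\cdots)\ge x_{19}^{-1}$ has no room (moreover $f(1)\le 2$ over three letters forces $x_1>1/2$). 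This is why the paper takes $p=61$, drops the length-one filler in favour of $\diamond^9$, and uses blocks $\diamond^i a$ for $i\in\{18,\ldots,26\}$: the argument of $\beta$ is then pushed into a range where $\beta$ is genuinely small (sometimes $0$ by the empty-max convention), and all $x_i$ stay well below $1/2$. The same comment applies, less dramatically, to your $p=4$ and $p=6$ for the large-alphabet cases, where the paper needs $p=12$ and $p=18$. In short: the skeleton is right, but the specific $W$ and $p$ you propose would make the verification step fail, and choosing them correctly is the actual content of the paper's proof.
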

\begin{proof}
For any alphabet $\Sigma$, we have $d(\Sigma)\ge2$.
Moreover, $d$ is a decreasing function of the size of the alphabet. Thus the third statement can easily be deduced from the second one.
We show the remaining statements independently of each others. 
We will show the upper bounds using Algorithm \ref{algosubgraph}, Lemma \ref{lemmagraphf} and Theorem \ref{mainTh}. 
The lower bounds are verified by exhaustive search.\\

\item If $\mathbf{|\Sigma|\ge6,\, d(\Sigma)=2}$:
Let $\Sigma = \{0,1,2,3,4,5\}$ and  $W=\{{\diamond}\}\cup\{{\diamond} a{\diamond} : a\in\Sigma\}\cup\{{\diamond} a{\diamond} b: a,b\in\Sigma\}.$
We can use Algorithm \ref{algosubgraph} to check that we can apply Lemma \ref{lemmagraphf} with  $f(1)=3$, $f(3)=f(4)=6$, $p=12$.
Thus conditions (I),(II) and (III) of Theorem \ref{mainTh} are fullfilled.
We can check with a computer that condition  \eqref{condeq2} of Theorem \ref{mainTh} is also fullfilled with $x_1=\frac{2}{5}$ and $x_3=x_4=\frac{1}{4}$.
This implies that for any $\mu\in W^\omega$ there are infinite square-free words over $\Sigma$ compatible with $\mu$.
Since $\{{\diamond} ^i a : i\ge1, a\in \Sigma \}^\omega\subseteq W^\omega$, we deduce that
for any $\mu\in \{{\diamond} ^i a : i\ge1, a\in \Sigma \}^\omega$ there are infinite square-free words over $\Sigma$ compatible with $\mu$.
We get $d(\{0,1,2,4,5,6\})\le2$.\\

\item $\mathbf{d(\{0,1,2,3\})=3}$:
Let $w=(0{\diamond}1{\diamond}2{\diamond}3{\diamond})^\omega$. 
An exhaustive search confirms that there are only $636$ square-free words over $\{0,1,2,3\}$ compatible with $w$.
Thus $d(\{0,1,2,3\})\ge3$.

Let $\Sigma = \{0,1,2,3\}$ and $W=\{{\diamond}\}\cup\{{\diamond}{\diamond} a{\diamond}: a\in\Sigma\}
\cup\{{\diamond}{\diamond}a{\diamond}{\diamond}b: a,b\in\Sigma\}.$
We can use Algorithm \ref{algosubgraph} to check that we can apply Lemma \ref{lemmagraphf} with  $f(1)=2$, $f(4)=5$ and $f(6)=8$, $p=18$.
We can then apply Theorem \ref{mainTh} with $x_1=\frac{11}{20}$, $x_4=\frac{1}{4}$ and $x_6=\frac{1}{5}$ and we deduce that
for any $\mu\in W^\omega$ there are infinite square-free words over $\Sigma$ compatible with $\mu$.
Moreover, $\{\diamond ^i a : i\ge2, a\in \Sigma \}^\omega\subseteq W^\omega.$
We deduce that for any $\mu\in \{\diamond ^i a : i\ge2, a\in \Sigma \}^\omega$ there are infinite square-free words over $\Sigma$ compatible with $\mu$.
Thus $d(\{0,1,2,3\})\le3$.\\

\item $\mathbf{7\le d(\{0,1,2\})\le 19}$:
Let $w=(0{\diamond}^51{\diamond}^52{\diamond}^5)^\omega$. 
An exhaustive search  confirms that there are only $4281$ square-free words over $\{0,1,2\}$ compatible with $w$.
Thus $d(\{0,1,2\})\ge7$.

Let $\Sigma = \{0,1,2\}$ and $W=\{\diamond^9\}\cup\{\diamond^i a: i\in\{18,\ldots, 26\}, a\in\Sigma\}.$
We can use Algorithm \ref{algosubgraph} to check that we can apply Lemma \ref{lemmagraphf} with $p=61$ and the values of $f$ given in Table \ref{tabfx}.
Thus conditions (I),(II) and (III) of Theorem \ref{mainTh} are fullfilled.
\begin{table}[htb]
\center
 \begin{tabular}{|c|c|c|c|c|c|c|c|c|c|c|}
 \hline
  $|w|$&9&19&20&21&22&23&24&25&26&27\rule{0pt}{12pt}\\[2px]\hline
  $f(|w|)$&4&19&22&28&36&50&63&88&118&148\rule{0pt}{12pt}\\[2px]\hline
  $x_{|w|}$&$\frac{27}{100}$  &$\frac{7}{100}$  &$\frac{13}{200}$  &$\frac{11}{200}$ &$\frac{9}{200}$ &$\frac{1}{25}$&$\frac{3}{100}$ &$\frac{1}{40}$ &$\frac{1}{40}$ &$\frac{1}{50}$\rule{0pt}{14pt}\\[4px] \hline 
 \end{tabular}
 \caption{The values of $f(|w|)$ and $x_{|w|}$ for the computation of $d(\{0,1,2\})$}
 \label{tabfx}
\end{table}
We can also check that the values of $x_{|w|}$ given in Table~\ref{tabfx} fulfill condition \eqref{condeq2} of Theorem \ref{mainTh}. 
We deduce that for any $\mu\in W^\omega$ there are infinite square-free words over $\Sigma$ compatible with $\mu$.
Moreover, $\{\diamond ^i a : i\ge18, a\in \Sigma \}^\omega\subseteq W^\omega.$
We deduce that for any $\mu\in \{\diamond ^i a : i\ge18, a\in \Sigma \}^\omega$ there are infinite square-free words over $\Sigma$ compatible with $\mu$.
Thus $d(\{0,1,2\})\le19$.
\end{proof}
The three applications of Algorithm \ref{algosubgraph} require between 30 and 100GB of RAM (and around 5 hours of computations). 
We had to optimize the way strings are stored in memory in order to be able to compute the graphs for large enough values of $p$.
The rest of the computations (finding the solution to the system and the exhaustive search) easily run
on a laptop in a few milliseconds.
Remark that we showed something slightly stronger since the results would still hold if an adversary was to tell us at every choice of letter only the next $5$ forced letters with their positions (that is, we know the next element of $W$).

Experimental computations suggest that $d(\{0,1,2\})$ is closer to 7 than to $19$ and that $d(\{0,1,2,3,4\})=2$.

\subsection*{Acknowledgement}
Computational resources have been provided by the Consortium des \'Equipements de Calcul Intensif (C\'ECI), 
funded by the Fonds de la Recherche Scientifique de Belgique (F.R.S.-FNRS) under Grant No. 2.5020.11 and by the Walloon Region

\end{document}